\documentclass[aps,prl,twocolumn,superscriptaddress]{revtex4-2}
\usepackage{bm}
\usepackage{mathrsfs}
\usepackage{amsmath}
\usepackage{amssymb}
\usepackage{amsfonts}
\usepackage{amsthm}
\usepackage{graphicx}
\usepackage{color}
\usepackage{dcolumn}
\usepackage[T1]{fontenc}
\usepackage{multirow}
\DeclareMathOperator{\Tr}{Tr}

\newcommand{\eff}{\mathsf{E}}
\newtheorem{theorem}{Theorem}
\newtheorem{proposition}{Proposition}
\newtheorem{lemma}{Lemma}

\begin{document}

\title{Forbidden Subspaces in Quantum State Smoothing}

\author{Chon-Fai Kam}
\email{dubussygauss@gmail.com}
\affiliation{Dipartimento di Fisica e Chimica ``Emilio Segr\`e'', Universit\`a degli Studi di Palermo, Via Archirafi 36, I-90123 Palermo, Italy}
\affiliation{DSIMB, Inserm, BIGR U1134, Universit\'e Paris Cit\'e and Universit\'e de La R\'eunion, 75015 Paris, France}

\author{Kai-Wen Wong}
\affiliation{Life Actuarial Department, Taiwan Insurance Institute, 6th floor, No.~3, Nan-Hai Road, Taipei 100, Taiwan (R.O.C.)}

\date{\today}

\begin{abstract}
A system between a preparation and a post-selection has had no agreed state since 1964. With positivity as the criterion, a post-selection admits an interval of orderings around the symmetric one exactly when the subspace it forbids is spanned by eigenvectors of a full-rank filtered state. Otherwise it certifies contextuality, testable on a qubit. The averaged filtered state carries the entanglement spectrum of the record, so the smallest forbidden subspaces a symmetry allows are even-dimensional in the Haldane class and odd in the trivial one. That parity is the record's topological class.
\end{abstract}

\maketitle

A quantum system between a preparation and a post-selection has had no agreed state since Aharonov, Bergmann and Lebowitz asked for one \cite{abl1964}. The two records fix every probability in between \cite{aharonov1991,ritchie1991,dressel2014}, yet the operator that fixes them is no density operator. Its real part can go negative, the anomalous weak value \cite{aav1988}, which certifies contextuality \cite{pusey2014,kunjwal2019,spekkens2008} and is a resource \cite{arvidsson2024,debievre2021,langrenez2024,thio2025,tan2024,zhang2026}. Quantum smoothing is where that pair arrives uninvited \cite{tsang2009,tsang2022,guevara2015,chantasri2021,liu2025}. Classically the two records combine into $\beta_t\pi_t$, a filtered density times a backward likelihood \cite{rauch1965,mayne1966,fraser1969,bresler1986,briers2010}, and the quantum problem inherits the halves alone. The record before $t$ gives the filtered state $\rho_t$ and the record after $t$ the retrofiltered effect $\eff_t$, whose pairing $\Tr[\eff_t\,\mathcal{I}_m(\rho_t)]$ gives the probability of an outcome $m$ at $t$ \cite{gammelmark2013,tan2015}. A state at $t$ is another matter. The one defined by optimal estimation over the unobserved environment \cite{laverick2021,laverick2021cost} fails to be classical even where filtering and retrofiltering succeed \cite{laverick2023}.

The obstruction is one of ordering. Quantum mechanically $\beta_t\pi_t$ becomes the split orderings $S_s=\rho^{s}\eff\rho^{1-s}$ \cite{leetsutsui2017,leetsutsui2018,cahill1969}. Its ends $S_0=\eff\rho$ and $S_1=\rho\eff$ are the Kirkwood--Dirac operator and its adjoint \cite{kirkwood1933,dirac1945}. Only a Hermitian member is a state, so take
\begin{equation}
\Phi_s(\eff,\rho)=\tfrac12\big(S_s+S_s^{\dagger}\big),\qquad s\in[0,1],
\label{eq:phis}
\end{equation}
the Heinz mean of left and right multiplication by $\rho$ \cite{heinz1951,bhatia1997}, whose ends coincide in the Margenau--Hill construction $\Phi_0=\Phi_1=\tfrac12(\eff\rho+\rho\eff)$ \cite{margenau1961}. Every member has trace $\Tr[\eff\rho]$ and reduces to $\eff\rho$ when the two commute, and none is positive throughout \cite{kochen1967,fine1982,lostaglio2023,horsman2017}.  States over time pick the Jordan product at the Margenau--Hill end \cite{leifer2013,fullwood2022,lie2024,parzygnat2023bayes,parzygnat2023axioms}, quantum information picks the Kirkwood--Dirac operator \cite{hofmann2014,umekawa2023}, and neither has post-selection in view. Positivity picks neither end but the symmetric split. Taking positivity of the image as the criterion of classicality, we ask what neither has asked. Which post-selections leave the smoothed state classical?

Monitored systems are now a laboratory for phases of matter \cite{li2018,skinner2019,fisher2023,lavasani2021,sang2021,morral2023,xiao2026}, for a reason lying in the monitoring. A system interacting in turn with a train of modes leaves the record in a many-body entangled pure state whose bond is the system, and every matrix product state arises this way \cite{schon2005,fannes1992,perezgarcia2007}. Keeping a record writes in time the entanglement a chain carries in space. While no outcome is read the global state stays pure, so the cut at $t$ has a Schmidt decomposition whose spectrum \cite{lihaldane2008} is that of the object the past record supplies, and the symmetry-protected classification of matrix product states \cite{haldane1983,aklt1987,pollmann2010,pollmann2012,chen2011,chen2013,schuch2011} is read from that same spectrum. The answer to our question is topological. A post-selection forbidding a subspace is an effect with a null space, and for full-rank $\rho_t$ it admits a positive image on an open interval of orderings about $s=\tfrac12$ exactly when that subspace is spanned by eigenvectors of $\rho_t$ (Theorem~\ref{thm:invariant}). Under a symmetry the eigenspaces of the record-averaged $\rho_t$ are multiplets, whose dimensions are all even in the Haldane class and all odd in the trivial class \cite{pollmann2010}. The smallest symmetric post-selections compatible with a classical smoothed state are then even-dimensional in one class and odd in the other (Fig.~\ref{fig:concept}), and where none is the pair certifies contextuality. Each step is imposed by the one before it.

\begin{figure}[t]
\includegraphics[width=\columnwidth]{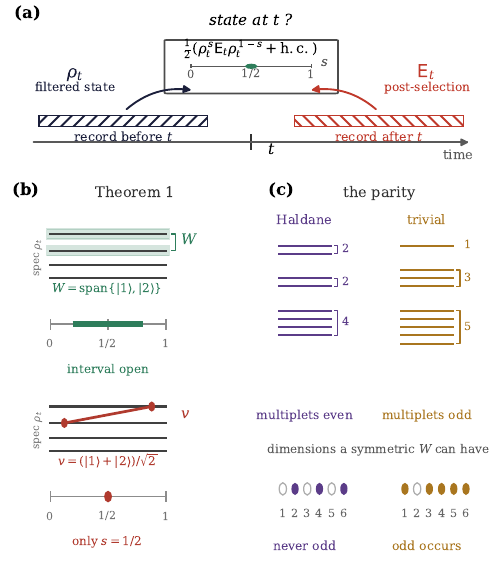}
\caption{(a) Quantum smoothing. The record before $t$ gives $\rho_t$ and the record after $t$ a post-selection $\eff_t$. Their pairing has a family of images indexed by $s$, only $s=\tfrac12$ being a state for every arrangement. (b) Theorem~\ref{thm:invariant}, for full-rank $\rho_t$. A forbidden subspace spanned by eigenvectors leaves an open interval admissible. Here $|1\rangle$ and $|2\rangle$ carry distinct eigenvalues, so their superposition is no eigenvector and leaves only $s=\tfrac12$. Widths are schematic. (c) Parity. A symmetric $W$ is a sum of whole multiplets, so its available dimensions are the sums drawn and filled here. They are even in the Haldane class and include odd ones in the trivial one.}
\label{fig:concept}
\end{figure}

\textit{What positivity selects.} Among the members of Eq.~\eqref{eq:phis} exactly one is a state for every $\rho$ and $\eff$, namely $\Phi_{1/2}=\rho^{1/2}\eff\rho^{1/2}$, a congruence of a positive operator. Positivity is therefore a question of how far from $s=\tfrac12$ one may move before it is lost, and the natural variable is the departure $\varepsilon=s-\tfrac12$. Factoring $\rho^{1/2}$ out of both sides of $\rho^{\frac12+\varepsilon}\eff\rho^{\frac12-\varepsilon}$ leaves $\rho^{\varepsilon}\eff\rho^{-\varepsilon}$ in the middle, and with $\mathcal{K}=[\log\rho,\,\cdot\,]$ this is $e^{\varepsilon\mathcal{K}}(\eff)$. The modular flow of $\rho$ is $\sigma^{\rho}_{u}(X)=\rho^{iu}X\rho^{-iu}=e^{iu\mathcal{K}}(X)$ \cite{takesaki1970,takesaki1972,connes1973}. Here that flow is applied to the effect and continued to imaginary time $u=-i\varepsilon$. Its adjoint is $e^{-\varepsilon\mathcal{K}}(\eff)$. The symmetrization in Eq.~\eqref{eq:phis} averages the two signs and gives the operator form of $H_{1/2+\varepsilon}(a,b)=\sqrt{ab}\cosh[\varepsilon\ln(a/b)]$,
\begin{equation}
\Phi_{\frac12+\varepsilon}(\eff,\rho)=\rho^{1/2}\cosh(\varepsilon\mathcal{K})(\eff)\,\rho^{1/2},
\label{eq:reduction}
\end{equation}
with $\cosh(\varepsilon\mathcal{K})(\eff)=\eff+\tfrac{\varepsilon^{2}}{2}[\log\rho,[\log\rho,\eff]]+O(\varepsilon^{4})$. Congruence by the invertible $\rho^{1/2}$ preserves positivity, so $\Phi_s$ is a state iff $\cosh(\varepsilon\mathcal{K})(\eff)\ge0$, and the identity answers the question it was built for. Positivity is lost at second order in the departure and never at first, since the first-order term $[\log\rho,\eff]$ is anti-Hermitian and the symmetrization removes it. The loss is symmetric, $\Phi_s=\Phi_{1-s}$, because $\cosh$ is even. How fast it is lost is set by the smallest eigenvalue of $\eff$ against the width of $\log\rho$, and by nothing about the dimension (End Matter~\ref{app:methods}). Multiplicativity singles it out again (End Matter~\ref{app:mult}).

\textit{A forbidden subspace.} The width is set by how far $\eff$ stands from singular, so the extreme case is a singular effect. That is a post-selection with a forbidden subspace, whose outcomes cannot precede the later record and which is the configuration of anomalous weak values \cite{aav1988,dressel2014,pusey2014}. There the interval can close entirely.

\begin{lemma}[A forbidden subspace closes the interval unless it is invariant]
\label{lem:rankdef}
Let $\rho>0$, $0\le\eff\le I$, and $\eff v=0$ for some unit vector $v$. Then
\begin{equation}
\big\langle v\,\big|\cosh(\varepsilon\mathcal{K})(\eff)\big|\,v\big\rangle
=-\varepsilon^{2}\,\big\langle (\log\rho)v\,\big|\,\eff\,\big|\,(\log\rho)v\big\rangle+O(\varepsilon^{4}),
\label{eq:nullleak}
\end{equation}
and the coefficient of $\varepsilon^{2}$ is strictly negative unless $(\log\rho)v\in\ker\eff$. If $(\log\rho)v\notin\ker\eff$, then $\Phi_s(\eff,\rho)$ is not positive for any $s\neq\tfrac12$ in some punctured neighbourhood of $\tfrac12$.
\end{lemma}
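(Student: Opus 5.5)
The plan is to expand $\cosh(\varepsilon\mathcal{K})(\eff)$ as a power series in $\varepsilon$ and evaluate each term in the state $v$. Write $L=\log\rho$, which is well defined and Hermitian because $\rho>0$. Since $\mathcal{K}$ is a bounded superoperator on a finite-dimensional space, the series $\cosh(\varepsilon\mathcal{K})(\eff)=\sum_{k\ge0}\frac{\varepsilon^{2k}}{(2k)!}\mathcal{K}^{2k}(\eff)$ converges in norm. Only even powers appear, so the remainder after the $\varepsilon^2$ term is $O(\varepsilon^4)$ uniformly. The $k=0$ term contributes $\langle v|\eff|v\rangle=0$. The $k=1$ term contributes $\tfrac{\varepsilon^2}{2}\langle v|[L,[L,\eff]]|v\rangle$.

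Next I expand the double commutator:
\[
[L,[L,\eff]]=L^2\eff-2L\eff L+\eff L^2 .
\]
Because $\eff v=0$ and $\eff$ is Hermitian, we also have $\langle v|\eff=0$. So the two outer terms vanish in the expectation value. The only survivor is $-2\langle v|L\eff L|v\rangle=-2\langle Lv|\eff|Lv\rangle$, using that $L$ is Hermitian. Multiplying by $\varepsilon^2/2$ gives exactly Eq.~\eqref{eq:nullleak}.

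For the sign of the coefficient, $\eff\ge0$ gives $\langle w|\eff|w\rangle=\|\eff^{1/2}w\|^2\ge0$ for every $w$. Equality holds iff $\eff^{1/2}w=0$, which is equivalent to $\eff w=0$. Taking $w=Lv$, the coefficient $-\langle Lv|\eff|Lv\rangle$ is strictly negative exactly when $Lv\notin\ker\eff$.

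For the final claim, suppose $c:=\langle Lv|\eff|Lv\rangle>0$. Then $f(\varepsilon)=\langle v|\cosh(\varepsilon\mathcal{K})(\eff)|v\rangle=-c\varepsilon^2+R(\varepsilon)$ with $|R(\varepsilon)|\le C\varepsilon^4$ for $|\varepsilon|\le1$. I will bound $C$ crudely by using $\|\mathcal{K}\|\le 2\|L\|$ and summing the tail of the $\cosh$ series. Then $f(\varepsilon)<0$ for $0<|\varepsilon|<\min(1,\sqrt{c/C})$. So $\cosh(\varepsilon\mathcal{K})(\eff)$ is not positive semidefinite there. By Eq.~\eqref{eq:reduction}, and since congruence by the invertible $\rho^{1/2}$ preserves and reflects positivity, $\Phi_{1/2+\varepsilon}$ is not positive either.

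Nothing here is a serious obstacle. The only care needed is the uniform $O(\varepsilon^4)$ remainder, which finite dimensionality (boundedness of $\mathcal{K}$) makes routine. It also helps to note that $f$ is real-analytic and even in $\varepsilon$, which justifies a punctured neighbourhood symmetric about $s=\tfrac12$.
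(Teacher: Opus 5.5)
Your proposal is correct and follows the paper's proof step for step. Both arguments use the even power series for $\cosh(\varepsilon\mathcal{K})(\eff)$, the vanishing of the outer terms of $[L,[L,\eff]]$ on $v$, the sign from $\|\eff^{1/2}Lv\|^{2}$, and the transfer to $\Phi_{1/2+\varepsilon}$ through the congruence in Eq.~\eqref{eq:reduction}; the differences are cosmetic, since you make the $O(\varepsilon^{4})$ constant explicit via $\|\mathcal{K}\|\le 2\|L\|$ and say that congruence reflects positivity where the paper writes down the witness $w=\rho^{-1/2}v$.
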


\begin{proof}
Write $L=\log\rho$, which is Hermitian because $\rho>0$. Set $X_{\varepsilon}=\cosh(\varepsilon\mathcal{K})(\eff)=\sum_{n\ge0}\varepsilon^{2n}\mathcal{K}^{2n}(\eff)/(2n)!$, a series of even powers converging for every $\varepsilon$ in finite dimension. The first two terms give Eq.~\eqref{eq:nullleak}. The constant term $\langle v|\eff|v\rangle$ vanishes because $\eff v=0$. The next is $\tfrac12\langle v|[L,[L,\eff]]|v\rangle$. Expanding $[L,[L,\eff]]=L^{2}\eff-2L\eff L+\eff L^{2}$ leaves only the middle term, since $\langle v|L^{2}\eff|v\rangle=\langle L^{2}v|\eff v\rangle=0$ and likewise for $\langle v|\eff L^{2}|v\rangle$. Hermiticity of $L$ gives $\langle v|L\eff L|v\rangle=\langle Lv|\eff|Lv\rangle$. The remainder is $O(\varepsilon^{4})$ because the omitted powers of $\varepsilon$ are even and at least fourth.

For the sign, positivity of $\eff$ gives $\langle w|\eff|w\rangle=\|\eff^{1/2}w\|^{2}$ for every $w$, an expression that is zero when $w\in\ker\eff$ and strictly positive otherwise. The coefficient of $\varepsilon^{2}$ in Eq.~\eqref{eq:nullleak} is therefore nonpositive, vanishing exactly when $Lv\in\ker\eff$ and strictly negative otherwise. It dominates the remainder for small enough $\varepsilon$, so $\langle v|X_{\varepsilon}|v\rangle<0$ for all $0<|\varepsilon|<\varepsilon_{0}$. Set $w=\rho^{-1/2}v$, which exists because $\rho>0$, so that Eq.~\eqref{eq:reduction} gives $\langle w|\Phi_{\frac12+\varepsilon}|w\rangle=\langle v|X_{\varepsilon}|v\rangle<0$ and $\Phi_s$ is not positive at those $s$.
\end{proof}

Numerically the failure extends to all of $[0,\tfrac12)$ in every one of $300$ random instances \cite{supp}, hence by $\Phi_s=\Phi_{1-s}$ to every $s\ne\tfrac12$. A forbidden subspace drawn at random leaves only the symmetric split. When it does not, the lemma gives a geometric condition. Imposing that on every null vector at once turns it into an exact statement about the subspace, which is the theorem.

\begin{theorem}[The interval survives a forbidden subspace iff the subspace is invariant]
\label{thm:invariant}
Let $\rho>0$ and $0\le\eff\le I$ with $\ker\eff\neq\{0\}$. Then $\Phi_s(\eff,\rho)\ge0$ for all $s$ in some open interval containing $\tfrac12$ if and only if $\ker\eff$ is invariant under $\rho$, that is, spanned by eigenvectors of $\rho$.
\end{theorem}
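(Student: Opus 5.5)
Both directions go through the reduction in Eq.~\eqref{eq:reduction}. Since congruence by the invertible $\rho^{1/2}$ preserves positivity, it suffices to decide when $X_\varepsilon=\cosh(\varepsilon\mathcal{K})(\eff)\ge0$ for all small $|\varepsilon|$. We work in an eigenbasis $\{e_i\}$ of $\rho$ with $\log\rho\, e_i=\ell_i e_i$. In this basis $\mathcal{K}$ acts entrywise, so
\begin{equation}
(X_\varepsilon)_{ij}=\cosh\big(\varepsilon(\ell_i-\ell_j)\big)\,\eff_{ij}.
\end{equation}
Thus $X_\varepsilon$ is the Schur product of $\eff$ with the matrix $C_\varepsilon=[\cosh(\varepsilon(\ell_i-\ell_j))]$.

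\emph{Necessity.} Suppose $\Phi_s\ge0$ on an open interval around $\tfrac12$. By Lemma~\ref{lem:rankdef}, every unit $v\in\ker\eff$ must then satisfy $(\log\rho)v\in\ker\eff$; otherwise positivity fails on a punctured neighbourhood. Applying this to all of $\ker\eff$ shows that $\ker\eff$ is invariant under the Hermitian operator $\log\rho$. An invariant subspace of a Hermitian operator is spanned by eigenvectors of that operator. The eigenvectors of $\log\rho$ and of $\rho$ coincide, since $\log$ is injective on the spectrum, so $\ker\eff$ is spanned by eigenvectors of $\rho$, i.e.\ it is $\rho$-invariant.

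\emph{Sufficiency.} Suppose $\ker\eff$ is $\rho$-invariant. Then $\ker\eff$ reduces $\rho$: both $\ker\eff$ and its complement $R=(\ker\eff)^\perp=\operatorname{ran}\eff$ are invariant under $\rho$, and hence under $\log\rho$ and $\rho^{\pm\varepsilon}$. Choose the eigenbasis adapted to this splitting. Then $\eff=0\oplus\eff_R$ with $\eff_R>0$ on $R$, and $\mathcal{K}$ preserves the block form. Consequently $X_\varepsilon=0\oplus\cosh(\varepsilon\mathcal{K}_R)(\eff_R)$, where $\mathcal{K}_R=[\log\rho|_R,\cdot\,]$.

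On $R$ the operator $\eff_R$ is strictly positive, with smallest eigenvalue $\lambda>0$. Moreover $\cosh(\varepsilon\mathcal{K}_R)(\eff_R)$ depends continuously on $\varepsilon$ in finite dimension, with
\begin{equation}
\big\|\cosh(\varepsilon\mathcal{K}_R)(\eff_R)-\eff_R\big\|\le \cosh(2\varepsilon\|\log\rho\|)-1,
\end{equation}
using $\|\mathcal{K}\|\le2\|\log\rho\|$. This deviation is below $\lambda$ for $|\varepsilon|<\varepsilon_0$, so the block stays positive definite there, and $X_\varepsilon\ge0$. Transporting back through Eq.~\eqref{eq:reduction} gives $\Phi_s\ge0$ for $|s-\tfrac12|<\varepsilon_0$.

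The main care point is the sufficiency direction. The vanishing of the $\varepsilon^2$ term on $\ker\eff$ alone would not control higher orders or the mixed entries between $\ker\eff$ and $R$. The block decomposition, which comes from invariance of both the kernel and its orthocomplement under $\rho$, removes all mixed terms exactly. Strict positivity of $\eff$ on its range then absorbs the perturbation, and this is what yields an open interval rather than just a second-order statement.
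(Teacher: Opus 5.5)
Your proof is correct and follows the paper's argument. Necessity applies Lemma~\ref{lem:rankdef} to every null vector and then uses the equivalence of $\rho$- and $\log\rho$-invariance, and sufficiency uses the block splitting along $\ker\eff\oplus(\ker\eff)^{\perp}$ together with strict positivity of $\eff$ on the complement; the only difference is that you control $\cosh(\varepsilon\mathcal{K}_R)(\eff_R)$ by the explicit bound $\cosh(2\varepsilon\|\log\rho\|)-1<\lambda_{\min}(\eff_R)$, where the paper uses continuity of $s\mapsto\Phi_s(\eff',\rho')$ and openness of strict positivity, which gives you an explicit half-width.
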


\begin{proof}
Since $\rho>0$ is Hermitian, a subspace is $\rho$-invariant iff it is spanned by eigenvectors of $\rho$. Such a subspace is invariant under $\log\rho$ as well, since $\log\rho$ acts on each eigenspace as a scalar. Conversely a $\log\rho$-invariant subspace is $\rho$-invariant, $\rho$ being a polynomial in $\log\rho$ in finite dimension, so the two invariances are the same condition on $\ker\eff$.

Suppose $\ker\eff$ is $\rho$-invariant. Since $\eff$ annihilates it and both operators are Hermitian, both preserve $\ker\eff$ and $(\ker\eff)^{\perp}$, and both are block diagonal in that splitting. Every power $\rho^{s}$ is then block diagonal too. Writing $\eff'$ and $\rho'$ for the restrictions to $(\ker\eff)^{\perp}$ gives $\Phi_s(\eff,\rho)=0\oplus\Phi_s(\eff',\rho')$, the first block vanishing because $\eff$ does there. On that subspace $\rho'>0$ and $\eff'>0$, the latter because $\eff\ge0$ and $\ker\eff$ has been removed. So $\Phi_{1/2}(\eff',\rho')=\rho'^{1/2}\eff'\rho'^{1/2}$ is strictly positive, a congruence of a strictly positive operator by an invertible one. Since $\rho'>0$ the map $s\mapsto\rho'^{s}$ is continuous, and so is $s\mapsto\Phi_s(\eff',\rho')$. Strict positivity is an open condition, so $\Phi_s(\eff',\rho')>0$ on an open interval about $\tfrac12$ and $\Phi_s(\eff,\rho)\ge0$ there.

Suppose instead that $\ker\eff$ is not $\rho$-invariant. It is then not $\log\rho$-invariant, so some unit $v\in\ker\eff$ has $(\log\rho)v\notin\ker\eff$, and Lemma~\ref{lem:rankdef} supplies $s\neq\tfrac12$ arbitrarily close to $\tfrac12$ at which $\Phi_s$ fails to be positive. No open interval about $\tfrac12$ can consist of admissible $s$.
\end{proof}

\begin{figure}[t]
\includegraphics[width=\columnwidth]{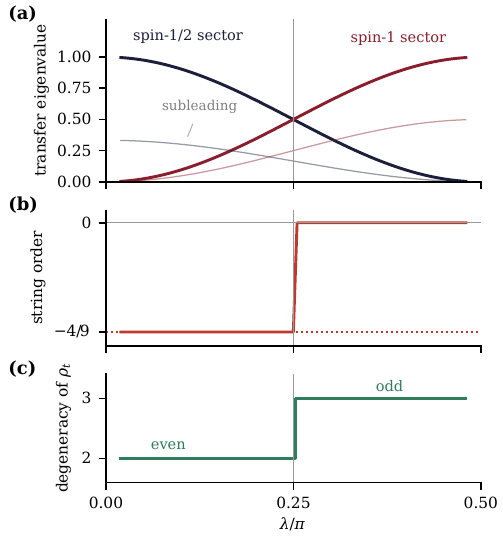}
\caption{A first-order transition of the record's class along Eq.~\eqref{eq:kraus}, which mixes a spin-$\tfrac12$ and a spin-$1$ block with weights $\cos\lambda$ and $\sin\lambda$. (a) Leading transfer eigenvalues of the two $SU(2)$ sectors, $\cos^{2}\lambda$ and $\sin^{2}\lambda$. Covariance forbids them to mix, so they cross without repulsion at $\lambda^{*}=\pi/4$, their separation $|\cos2\lambda|$ closing linearly. The subleading eigenvalue of each sector is in a lighter tone. Cross-sector eigenvalues, not drawn, lead only beyond $|\lambda-\lambda^{*}|\simeq0.03\pi$. (b) String order of the outcomes between sites $1$ and $60$, with the leading eigenvectors of $T$ as boundaries, so the step is not rounded \cite{supp}. It is $-4/9$ below $\lambda^{*}$ and zero above, the two-point function decaying on both sides. (c) Degeneracy of $\bar\rho$, two below $\lambda^{*}$ and three above, the parities Pollmann's criterion assigns to the classes. At $\lambda^{*}$ the channel has two fixed points, so $\bar\rho$ and the plotted points there are not unique.}
\label{fig:sptrecord}
\end{figure}

The condition is a compatibility between the post-selection and the past, since a post-selection may forbid whatever the filtered state does not mix with its complement and nothing else. A subspace inside one eigenspace of $\rho_t$, or spanned by eigenvectors from several, therefore leaves the interval open and narrowed only by the modular spread. A vector superposing two eigenvalues closes it \cite{supp}. The admissible forbidden subspaces are exactly the $\rho_t$-invariant ones. Under a symmetry the smallest that are symmetric as well are single multiplets, and it is their dimensions that the rest of the Letter is about.

\textit{What the collapse certifies.} A collapsed interval says more than that no ordering works. Let $v\in\ker\eff$. Since $\eff v=0$ and $\langle v|\eff=0$, the Margenau--Hill member $\Phi_0=\tfrac12(\eff\rho+\rho\eff)$ obeys $\langle v|\Phi_0|v\rangle=0$. A positive operator annihilates every vector on which its quadratic form vanishes, so $\Phi_0\ge0$ would force $\Phi_0 v=0$, hence $\eff\rho v=0$ and $\rho v\in\ker\eff$. Whenever Theorem~\ref{thm:invariant} denies the interval, $\Phi_0$ has a negative eigenvalue, witnessed by $w=v-t\Phi_0 v$ at small $t>0$. The hypothesis is $\rho\ge0$, so the certificate covers the low-rank state of a single record, unlike Theorem~\ref{thm:invariant}.

That eigenvalue is observable, since for unit $w$ and $\Pi=|w\rangle\langle w|$ the quantity $\langle w|\Phi_0|w\rangle=\mathrm{Re}\Tr[\eff\Pi\rho]$ is the real part of the weak value $\Pi_w=\Tr[\eff\Pi\rho]/\Tr[\eff\rho]$ up to the positive factor $\Tr[\eff\rho]$. The equivalence is due to Johansen and Luis \cite{johansen2004}. A negative real part lies outside the eigenvalue range $\{0,1\}$ of $\Pi$. Under the conditions of \cite{pusey2014,kunjwal2019} such an anomaly requires contextuality \cite{spekkens2008}, and how much follows from the same leak. Let $p_-$ be the joint probability that a weak pointer of width $\sigma$ reads negative and the post-selection succeeds, and $p_-^{\rm NC}$ the noncontextual bound \cite{kunjwal2019}. Optimizing over $\sigma$,
\begin{equation}
\max_\sigma\big(p_--p_-^{\rm NC}\big)\ \ge\ \frac{\|\eff\rho v\|^{4}}{32\pi(1-\Tr[\eff\rho])\,\|\Phi_0\|^{2}},
\label{eq:violation}
\end{equation}
with $v$ a forbidden direction and $\Phi_0$ the Margenau--Hill member (End Matter~\ref{app:contextual}). A qubit suffices. Prepare $\rho=\mathrm{diag}(0.97,0.03)$, forbid the Bloch direction at $0.31\pi$, and read a pointer $G_\sigma\propto e^{-x^{2}/2\sigma^{2}}$ of width $2$ \cite{kunjwal2019}. Then $\mathrm{Re}\,\Pi_w=-0.47$ and the bound is exceeded by $1.6\times10^{-2}$, which five standard deviations reach in $1.3\times10^{4}$ runs. Holding $\sigma$ and $\Pi$ fixed, noise $\eff\mapsto(1-2\epsilon)\eff+\epsilon I$ removes the violation at $\epsilon=0.11$ \cite{supp}. The collapse therefore certifies contextuality of the pair through a quantity the two-record configuration already yields.

\textit{The class of the record.} A system monitored by a channel $\mathcal{E}(X)=\sum_m A^{m}XA^{m\dagger}$ \cite{davies1976,ozawa1984,lindblad1976,breuer2002,belavkin1994,carmichael1993,wiseman2010} generates its record as a matrix product state in time whose tensors are the Kraus operators $A^{m}$, with the system as bond \cite{schon2005,osborne2010,verstraete2010,cirac2021}. Read as a hidden Markov model, a symmetry acts projectively on the bond and linearly on the outcomes \cite{souissi2026,souissi2026akl,souissi2026causal}. Averaged over records the filtered state becomes the fixed point $\bar\rho$ of $\mathcal{E}$, the left environment of that matrix product state and the leading eigenvector of the transfer matrix $T=\sum_m A^{m}\otimes\bar{A}^{m}$, whose gap sets the correlation time of the record. In the gauge $\sum_m A^{m\dagger}A^{m}=I$ the spectrum of $\bar\rho$ is the entanglement spectrum of the record at the cut. A single record instead gives a conditional state of low rank, so what follows concerns $\bar\rho$. The effect is supplied by the post-selection, an unconditioned future supplying $I$ in that gauge. When preparation, dynamics and post-selection respect a group $G$, both environments commute with its representation $V$ on the bond and Schur's lemma makes the eigenspaces of $\bar\rho$ sums of multiplets. The class of the record is the projective class of $V$, read from those degeneracies \cite{pollmann2010,chen2013}. For $G=SO(3)$ the class is $\mathbb{Z}_2$, half-integer against integer spin on the bond, and every multiplet of $\bar\rho$ has even dimension in the Haldane class and odd dimension in the trivial one. A forbidden subspace must then be invariant under $\bar\rho$ by Theorem~\ref{thm:invariant} and under $G$ by symmetry. It is a sum of multiplets, the minimal ones being single irreducible multiplets $W$ with
\begin{equation}
\dim W\equiv\begin{cases}0 \pmod 2 & \text{Haldane class,}\\ 1 \pmod 2 & \text{trivial class.}\end{cases}
\label{eq:parity}
\end{equation}
Since every admissible forbidden subspace is a sum of these, the Haldane class admits none of odd dimension while the trivial class admits some. That is the invariant in binary form.
The class does not appear in the values of the family, which at a renormalization fixed point are scalars by Schur's lemma, but in its structure. Even a $2\pi$ rotation of the outcome basis, under which the bond returns as $V(2\pi)=-1$, leaves every $\Phi_s$ invariant. The reason is that $\Phi_s$ is built from $V\cdot V^{\dagger}$. The multiplet structure of $\bar\rho$ is all the family sees.

\textit{Changing the class.} The class changes along a covariant family only where the channel has more than one fixed point, a non-injective matrix product state \cite{schuch2011,wolf2006}. In canonical form a matrix product tensor is a direct sum of injective blocks \cite{perezgarcia2007,wolf2006}, and injective channels of different class cannot meet at a unique fixed point \cite{chen2011,schuch2011}. The crossing takes this form of necessity. The simplest realization has a bond of spin-$\tfrac12\oplus$ spin-$1$ with spin-$1$ outcomes $m\in\{+1,0,-1\}$ and Kraus operators
\begin{equation}
A^{m}(\lambda)=\cos\lambda\,A^{m}_{1/2}\;\oplus\;\sin\lambda\,A^{m}_{1},
\label{eq:kraus}
\end{equation}
$A_{1/2}$ the AKLT tensors and $A_{1}$ the Clebsch--Gordan tensor coupling bond spin-$1$ to itself through the outcome \cite{supp}. Each block alone is trace preserving while the sum is not, so Eq.~\eqref{eq:kraus} is a tensor whose norm the sectors compete to carry. Rescaling by the dominant weight restores the gauge on the sector that survives. The AKLT state settled Haldane's conjecture that integer-spin chains have a gap \cite{haldane1983,aklt1987}, its sites being pairs of spin-$\tfrac12$ variables, one bound to each neighbour and one unpaired at the ends. Read along time, the same tensors give the AKLT chain as the record of a spin-$\tfrac12$ system emitting spin-$1$ outcomes, the unpaired half carrying the class. The Supplemental Material translates the two readings term by term \cite{supp}. Each block is covariant with $V=D^{(1/2)}\oplus D^{(1)}$, and the two lie in sectors of $V$ that covariance forbids to mix. Their leading eigenvalues $\cos^{2}\lambda$ and $\sin^{2}\lambda$ cross without repulsion \cite{vonneumann1929} at $\lambda^{*}=\pi/4$, separated by $|\cos2\lambda|$ (Fig.~\ref{fig:sptrecord}), which within $0.03\pi$ of $\lambda^{*}$ is the gap of $T$ itself. Below $\lambda^{*}$ the AKLT block dominates. The string order \cite{dennijs1989,kennedy1992} is $-4/9$, the two-point function decays, and $\bar\rho=I/2$ has even degeneracy. Above $\lambda^{*}$ the spin-$1$ block dominates, the string order vanishes, and $\bar\rho=I/3$ has odd degeneracy. At $\lambda^{*}$ the channel has two fixed points and the phases coexist, so the transition is first order, protected by the symmetry that keeps the sectors apart. Across it the parity in Eq.~\eqref{eq:parity} flips, and with it the admissible post-selections.

\textit{Away from the fixed points.} Deforming a block inside its phase gives $\bar\rho$ eigenspaces of dimensions $2$ and $4$. Over twenty random effects either multiplet gives half-width $0.50$ and no admissible choice falls below $0.06$, while a superposition closes it \cite{supp}. A deformed channel on the trivial side gives dimensions $3$, $1$, $5$ and the opposite parity \cite{supp}.

\textit{Two kinds of transition.} The transitions of $\Phi_s$ are Landau-type level crossings, with the sector holding $\lambda_{\min}$ as order parameter \cite{supp}. The record's transition has no local order parameter and lives in the transfer gap, and Eq.~\eqref{eq:parity} is where the two meet. Measurement-induced symmetry-protected phases \cite{lavasani2021,sang2021,morral2023,xiao2026} use the entanglement spectrum, which Theorem~\ref{thm:invariant} turns into a post-selection claim. Their object is the filtered state, positive by construction and lacking a future record. The class is read from the record, whose string operator is diagonal in the outcome basis, so its expectation is a classical average. Sampling gives $-0.4442(5)$ against $-4/9$ for the AKLT block and $0.0019(19)$ for the other \cite{supp}, with no post-selection problem \cite{morral2023}, the outcomes being classical data already. Positivity is one criterion among several and no member meets them all \cite{supp}, the physical state being fixed by the unraveling \cite{laverick2023}. Within positivity the symmetric split is singled out twice. No other member is positive throughout, and Proposition~\ref{prop:mult} makes it the only multiplicative one.

\textit{Conclusion.} A forbidden subspace leaves the smoothed state classical exactly when a full-rank filtered state does not mix it with its complement, and where it does the pair certifies contextuality. Under a symmetry the smallest subspaces that qualify are single multiplets, even-dimensional in the Haldane class and odd in the trivial one. Both are exact, and both concern the arrangement, whatever the run.

Symmetry-protected order is sought in a state and detected by an entanglement diagnostic. Here it came from a question with no topology in it. We asked how past and future records combine, and the combinations that stay classical know the topology of the record.

Three questions are left open. Whether some member of the ordering family can meet every reasonable condition at once \cite{supp} is not known. For groups with richer second cohomology than $SO(3)$ the minimal symmetric forbidden subspaces should carry more than a parity, and which dimensions occur is unknown. The transition here is first order because covariance keeps the sectors apart, and whether two classes can be joined continuously would say whether a record has phases in the full sense or only classes.

\begin{acknowledgments}
C.-F.~K. is grateful to the European Union and the Region R\'eunion, France (POE FEDER 2021--2027, n$^{\circ}$2025-0954-007180) for the funding support.
\end{acknowledgments}

\clearpage
\onecolumngrid
\begin{center}\textbf{End Matter}\end{center}
\twocolumngrid
\appendix

\section{The collapse certifies contextuality}
\label{app:contextual}

Both steps of the certificate are given here in full, with a bound on the anomaly they force.

\begin{theorem}[A collapsed interval forces a negative Margenau--Hill image]
\label{thm:mh}
Let $\rho\ge0$ and $0\le\eff\le I$ on a finite-dimensional space, with $\ker\eff\neq\{0\}$. If $\Phi_0(\eff,\rho)=\tfrac12(\eff\rho+\rho\eff)\ge0$ then $\ker\eff$ is invariant under $\rho$. Equivalently, a $\ker\eff$ that is not $\rho$-invariant gives $\Phi_0$ a negative eigenvalue. By Theorem~\ref{thm:invariant} that is exactly the case in which no open interval of orderings is admissible.
\end{theorem}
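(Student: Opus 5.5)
The plan is to argue vector by vector on $\ker\eff$, using only that a positive semidefinite operator annihilates every vector on which its quadratic form vanishes. Fix $v\in\ker\eff$. Since $\eff$ is Hermitian, $\eff v=0$ and $\langle v|\eff=0$, so
\[
\langle v|\Phi_0|v\rangle=\tfrac12\big(\langle v|\eff\rho|v\rangle+\langle v|\rho\eff|v\rangle\big)=0 .
\]
If $\Phi_0\ge0$, write $\Phi_0=B^{\dagger}B$. Then $\|Bv\|^{2}=0$, hence $\Phi_0v=0$. But $\Phi_0v=\tfrac12(\eff\rho v+\rho\eff v)=\tfrac12\eff\rho v$, so $\eff\rho v=0$, that is $\rho v\in\ker\eff$. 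As $v$ was arbitrary, $\rho(\ker\eff)\subseteq\ker\eff$. Nothing here needs $\rho>0$; only $\rho\ge0$ and Hermiticity are used, which is why the statement holds for low-rank $\rho$.

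The ``equivalently'' clause is the contrapositive, read with a little care. If $\ker\eff$ is not $\rho$-invariant, pick $v\in\ker\eff$ with $\eff\rho v\neq0$. Then $\Phi_0v=\tfrac12\eff\rho v\neq0$, so $\Phi_0$ is not positive semidefinite. Since $\Phi_0$ is Hermitian, it therefore has a negative eigenvalue.

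To make the witness explicit, as in the main text, take $w=v-t\Phi_0v$. Using $\langle v|\Phi_0|v\rangle=0$,
\[
\langle w|\Phi_0|w\rangle=-2t\|\Phi_0v\|^{2}+t^{2}\langle\Phi_0v|\Phi_0|\Phi_0v\rangle ,
\]
and this is negative for small $t>0$.

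For the final sentence, I would note that $\rho$-invariance for Hermitian $\rho$ is the same as being spanned by eigenvectors of $\rho$. When $\rho>0$, Theorem~\ref{thm:invariant} identifies non-invariance of $\ker\eff$ with the absence of any admissible open interval about $s=\tfrac12$. So, restricted to full-rank $\rho$, the negative Margenau--Hill eigenvalue occurs exactly when the interval collapses.

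The only point needing attention is this last step. The forward implication holds for $\rho\ge0$, but the identification with the collapsed interval is only available for $\rho>0$, since Theorem~\ref{thm:invariant} assumes it. I would state that restriction explicitly rather than claim it in general.
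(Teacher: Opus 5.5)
Your proposal is correct and follows the paper's proof essentially step for step. It uses the vanishing of $\langle v|\Phi_0|v\rangle$ on $\ker\eff$, the factorization $\Phi_0=B^{\dagger}B$ forcing $\Phi_0v=\tfrac12\eff\rho v=0$, the contrapositive, and the same witness $w=v-t\Phi_0v$ that the paper derives right after the proof. Your caveat that the link to the collapsed interval needs $\rho>0$, because Theorem~\ref{thm:invariant} assumes it, matches the paper's own remark that the certificate itself holds for $\rho\ge0$ unlike Theorem~\ref{thm:invariant}.
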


\begin{proof}
Johansen and Luis showed that an anomalous real part is equivalent to negativity of the Margenau--Hill distribution, for mixed preparations and imperfect detectors alike \cite{johansen2004}. Theorem~\ref{thm:mh} adds the source of that negativity and the bound below adds its size. Let $v\in\ker\eff$. Then $\eff v=0$, and since $\eff$ is Hermitian also $\langle v|\eff=(\eff v)^{\dagger}=0$. Hence
\begin{equation}
\langle v|\Phi_0|v\rangle=\tfrac12\langle v|\eff\rho+\rho\eff|v\rangle=0,
\label{eq:mhnull}
\end{equation}
each of the two terms vanishing because $\eff$ annihilates $v$ on one side or the other.

Suppose $\Phi_0\ge0$. Write $\Phi_0=B^{\dagger}B$ for some $B$, possible because $\Phi_0$ is Hermitian and positive. Then $\langle v|\Phi_0|v\rangle=\|Bv\|^{2}$, so Eq.~\eqref{eq:mhnull} gives $Bv=0$ and therefore $\Phi_0 v=B^{\dagger}Bv=0$. Expanding gives $\Phi_0 v=\tfrac12(\eff\rho v+\rho\eff v)=\tfrac12\eff\rho v$, so $\eff\rho v=0$ and $\rho v\in\ker\eff$. As $v$ was arbitrary in $\ker\eff$, the subspace is $\rho$-invariant. The contrapositive is the second statement.
\end{proof}

The proof also supplies the witness and its size. Suppose $\ker\eff$ is not $\rho$-invariant and fix $v\in\ker\eff$ of unit norm with $\eff\rho v\neq0$. Put $g=\|\Phi_0v\|=\tfrac12\|\eff\rho v\|$ and take the unnormalized vector $w_t=v-t\Phi_0v$ with $t>0$. Using Eq.~\eqref{eq:mhnull} and the Hermiticity of $\Phi_0$, the value is $-2tg^{2}+t^{2}\langle\Phi_0v|\Phi_0|\Phi_0v\rangle$, so
\begin{equation}
\langle w_t|\Phi_0|w_t\rangle\le g^{2}t\big(t\|\Phi_0\|-2\big),
\label{eq:witness}
\end{equation}
which is negative for every $0<t<2/\|\Phi_0\|$ and least at $t=1/\|\Phi_0\|$. Since $\|w_t\|^{2}=1+t^{2}g^{2}$ and $g\le\|\Phi_0\|$,
\begin{equation}
\lambda_{\min}(\Phi_0)\le-G,\qquad G\equiv\frac{\|\eff\rho v\|^{2}}{8\,\|\Phi_0\|},
\label{eq:mhbound}
\end{equation}
where $\|\Phi_0\|\le\|\eff\|\,\|\rho\|\le1$. The guarantee $G$ never exceeds half of what it guarantees. Since $\langle v|\Phi_0|v\rangle=0$, the variance inequality of Bhatia and Davis \cite{bhatiadavis2000} gives $\|\Phi_0v\|^{2}\le|\lambda_{\min}|\lambda_{\max}\le|\lambda_{\min}|\,\|\Phi_0\|$, so $G\le|\lambda_{\min}(\Phi_0)|/2$. The factor two is attained on a qubit with $\rho$ pure, $\eff=I-|v\rangle\langle v|$ and $|\langle\psi|v\rangle|^{2}=\tfrac34$, where the guarantee is $1/16$ and $\lambda_{\min}=-1/8$. That configuration also saturates the known ceiling $\mathrm{Re}\Tr[\eff\Pi\rho]\ge-1/8$ \cite{allahverdyan2014}, which runs the other way, so $G$ never exceeds $1/16$. Both statements were checked on $2\times10^{4}$ random instances \cite{supp}.

Here $\|\eff\rho v\|$ is the leak of Lemma~\ref{lem:rankdef} measured at the Margenau--Hill end, not near $s=\tfrac12$, and the difference matters. Near the symmetric split the expansion of $\cosh$ contributes a factor $\varepsilon^{2}$, so a tilt of angle $\delta$ out of invariance violates positivity only at order $\delta^{2}\varepsilon^{2}$. At $s=0$ the same tilt gives order $\delta^{2}$ with an $O(1)$ coefficient \cite{supp}.

The second step is the reading of $\Phi_0$. For a unit vector $w$ and $\Pi=|w\rangle\langle w|$,
\begin{equation}
\langle w|\Phi_0|w\rangle=\mathrm{Re}\Tr[\eff\Pi\rho],
\label{eq:mhweak}
\end{equation}
since $\Tr[\rho\eff\Pi]=\overline{\Tr[\eff\Pi\rho]}$ for Hermitian $\eff,\rho,\Pi$. Dividing by $\Tr[\eff\rho]>0$ makes this the real part of the weak value $\Pi_w=\Tr[\eff\Pi\rho]/\Tr[\eff\rho]$ with preparation $\rho$ and post-selection $\eff$. A negative eigenvalue of $\Phi_0$ therefore exhibits a projector whose weak value falls below the range $\{0,1\}$ of its eigenvalues, and Eq.~\eqref{eq:mhbound} says how far:
\begin{equation}
\mathrm{Re}\,\Pi_w\le-\frac{\|\eff\rho v\|^{2}}{8\,\Tr[\eff\rho]}.
\label{eq:wvbound}
\end{equation}
An anomalous weak value requires contextuality \cite{pusey2014}, in line with the general equivalence of negativity and contextuality \cite{spekkens2008}, but not unconditionally. Kunjwal, Lostaglio and Pusey give a noise-robust form \cite{kunjwal2019}. Couple $\Pi$ weakly to a pointer of width $\sigma$ and post-select on $\eff$. Let $p_-$ be the probability of a negative reading and a successful post-selection, and $p_F=\Tr[\eff\rho]$. Their first theorem assumes two operational facts. The pointer distribution is a mixture of shifted copies $q(x-1)$ and $q(x)$ of one distribution of median zero, keyed to a two-outcome measurement of $\Pi$, and the pointer-averaged channel is $(1-p_d)\mathcal{I}+p_d\mathcal{M}^{D}$. Dropping either admits a classical model of the anomaly. A noncontextual model then obeys
\begin{equation}
p_-\le p_F/2+(1-p_F)p_d.
\label{eq:klp}
\end{equation}
Neither assumption constrains the post-selection or requires a pure preparation, so $\eff$ and $\rho$ qualify as they stand. In the Gaussian realization the amplitude $G_\sigma(x)=(\pi\sigma^{2})^{-1/4}e^{-x^{2}/2\sigma^{2}}$ fixes both the pointer statistics and the disturbance. Its overlap $\int G_\sigma(x-1)G_\sigma(x)\,dx=e^{-1/4\sigma^{2}}$ gives $p_d=(1-e^{-1/4\sigma^{2}})/2$ and
\begin{equation}
p_-=\frac{p_F}{2}-\frac{1}{\sqrt{\pi}\,\sigma}\mathrm{Re}\Tr[\eff\Pi\rho]+o(1/\sigma),
\label{eq:klpquantum}
\end{equation}
so a negative real part violates Eq.~\eqref{eq:klp} once $\sigma$ is large enough. Maximizing over $\sigma$ gives a violation $2(\mathrm{Re}\Tr[\eff\Pi\rho])^{2}/\pi(1-p_F)$ to leading order, matched to three digits by direct evaluation in dimensions $2$ through $8$ \cite{supp}. With Eq.~\eqref{eq:mhbound} this is the bound of Eq.~\eqref{eq:violation}. A collapsed interval therefore forces not merely some violation of Eq.~\eqref{eq:klp} but one whose size follows from the arrangement alone. The same reference gives a version keyed to a nonzero imaginary part, which the collapse also supplies, since $[\eff,\rho]=0$ would leave $\ker\eff$ invariant.

The certificate is confined to the Margenau--Hill end, not by choice. Writing $\Tr[\eff\rho^{1-s}\Pi\rho^{s}]=\Tr[(\rho^{s}\eff\rho^{-s})\rho\Pi]$ exhibits $\Phi_s$ as the Margenau--Hill form of the flowed effect $\rho^{s}\eff\rho^{-s}$, which is not Hermitian away from $s=0$ and so names no weak value. For one Wishart pair in dimension four the operator norm of its antihermitian part grows from zero at $s=0$ to $3.3$ at $s=\tfrac12$, and its Hermitian part loses positivity by $s=0.3$ \cite{supp}. Theorem~\ref{thm:mh} makes this harmless. It carries the collapse from the symmetric split, where Theorem~\ref{thm:invariant} applies, to the Margenau--Hill end where the weak value is defined.

A commuting pair leaves $\ker\eff$ invariant, so Eq.~\eqref{eq:mhbound} vanishes and forces no anomaly. The converse of Theorem~\ref{thm:mh} is false. For noncommuting $\eff$ and $\rho$ the operator $\Phi_0$ is generically negative even on an invariant kernel, in a fraction of random instances that grows with dimension \cite{supp}, so the certificate runs one way only.

\section{Multiplicativity selects the symmetric split}
\label{app:mult}

A criterion that selects a point is worth little if the point moves when the criterion changes. Two systems monitored separately ought to have a smoothed description that is the product of the two, as the classical smoother does without comment. Asking the same of Eq.~\eqref{eq:phis} is asking which member respects the tensor product.

\begin{proposition}
\label{prop:mult}
Let $\rho_i>0$ and $0\le\eff_i\le I$ on finite-dimensional spaces of dimension at least two. Then $\Phi_s(\eff_1\otimes\eff_2,\rho_1\otimes\rho_2)=\Phi_s(\eff_1,\rho_1)\otimes\Phi_s(\eff_2,\rho_2)$ for all such states and effects if and only if $s=\tfrac12$.
\end{proposition}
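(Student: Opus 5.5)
The plan has two directions. The easy one is $s=\tfrac12$; the substantive one is showing that every $s\neq\tfrac12$ fails, which I would do by reducing to a scalar identity on a qubit.

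\emph{The symmetric split is multiplicative.} Set $s=\tfrac12$. Since $\rho_1\otimes\rho_2>0$, functional calculus gives $(\rho_1\otimes\rho_2)^{1/2}=\rho_1^{1/2}\otimes\rho_2^{1/2}$. Also $S_{1/2}=\rho^{1/2}\eff\rho^{1/2}$ is already Hermitian, so $\Phi_{1/2}=S_{1/2}$. The mixed-product rule then factors it directly:
\[
(\rho_1^{1/2}\otimes\rho_2^{1/2})(\eff_1\otimes\eff_2)(\rho_1^{1/2}\otimes\rho_2^{1/2})=\Phi_{1/2}(\eff_1,\rho_1)\otimes\Phi_{1/2}(\eff_2,\rho_2).
\]
More generally, $S_s$ is multiplicative for every $s$. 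Writing $S_s(1)$ and $S_s(2)$ for the split orderings of the two factors, the obstruction is the identity
\[
\tfrac12\big(S_s(1)\otimes S_s(2)+S_s(1)^\dagger\otimes S_s(2)^\dagger\big)\;\text{versus}\;\tfrac14\big(S_s(1)+S_s(1)^\dagger\big)\otimes\big(S_s(2)+S_s(2)^\dagger\big).
\]
Their difference is $\tfrac14\big(S_s(1)-S_s(1)^\dagger\big)\otimes\big(S_s(2)-S_s(2)^\dagger\big)$, a tensor product of the two anti-Hermitian parts. Multiplicativity therefore holds for a given pair exactly when one of the factors has Hermitian $S_s$.

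\emph{Every $s\neq\tfrac12$ fails.} It then suffices to find, for $s\neq\tfrac12$, a single pair $(\eff,\rho)$ on a qubit (embedded in any dimension $\ge2$ as a direct sum with a trivial block) with $S_s\neq S_s^\dagger$, and to use that same pair on both factors. For any nonzero anti-Hermitian $A=S_s-S_s^\dagger$, the operator $A\otimes A$ is nonzero, since the tensor product of nonzero operators is nonzero. I would take $\rho=\mathrm{diag}(a,b)$ with $a\neq b$ and $\eff$ with off-diagonal entry $c\neq0$. Then the $(1,2)$ entry of $S_s$ is $a^sb^{1-s}c$, and the $(1,2)$ entry of $S_s^\dagger$ is $a^{1-s}b^{s}c$. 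These are equal iff $(a/b)^{2s-1}=1$, which for $a\neq b$ forces $s=\tfrac12$. This is the same computation that underlies Eq.~\eqref{eq:reduction}: the first-order modular term $\sinh(\varepsilon\mathcal{K})(\eff)$ is nonzero whenever $[\log\rho,\eff]\neq0$.

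\emph{Expected obstacle.} The step needing the most care is the tensor-difference identity. I would verify it by expanding $\rho_1^s\otimes\rho_2^s$ and grouping the four cross terms. The remaining points are routine: choosing $\eff$ with $0\le\eff\le I$, say $\eff=\tfrac12(I+\tfrac12\sigma_x)$, and checking that $(\rho_1\otimes\rho_2)^s=\rho_1^s\otimes\rho_2^s$. The dimension hypothesis is used only to guarantee room for a noncommuting pair.
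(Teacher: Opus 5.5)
Your proof is correct and follows the paper's argument: the same factorization of the defect as $\tfrac14(A-A^\dagger)\otimes(B-B^\dagger)$, followed by the same entrywise comparison $(a/b)^{2s-1}=1$ for a diagonal $\rho$ with distinct eigenvalues and an effect with nonzero off-diagonal entry. One detail to state explicitly: when you pad the qubit example to higher dimension, the added block of $\rho$ must be positive definite, not zero, so that the hypothesis $\rho>0$ still holds.
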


\begin{proof}
Write $A=\rho_1^{s}\eff_1\rho_1^{1-s}$, $B=\rho_2^{s}\eff_2\rho_2^{1-s}$, and let $\Delta_s$ denote the difference between the two sides of the claimed identity. Since $(\rho_1\otimes\rho_2)^{s}=\rho_1^{s}\otimes\rho_2^{s}$, the product system gives $\tfrac12(A\otimes B+A^{\dagger}\otimes B^{\dagger})$ while the factors give $\tfrac14(A+A^{\dagger})\otimes(B+B^{\dagger})$. Hence
\begin{equation}
\Delta_s=\tfrac14\,(A-A^{\dagger})\otimes(B-B^{\dagger}).
\label{eq:multgap}
\end{equation}
A tensor product vanishes only if a factor does, and the factors range independently, so $\Delta_s$ vanishes throughout exactly when $\rho^{s}\eff\rho^{1-s}$ is Hermitian for every state and effect on one system. In the eigenbasis $\rho=\sum_j p_j|j\rangle\langle j|$ that operator has entries $p_j^{s}\eff_{jk}p_k^{1-s}$ while its adjoint has $p_k^{s}p_j^{1-s}\eff_{jk}$, using $\eff_{kj}=\overline{\eff_{jk}}$. Hermiticity therefore needs $(p_j/p_k)^{2s-1}=1$ for every pair with $\eff_{jk}\neq0$. Dimension at least two allows $p_1\neq p_2$ and $\eff_{12}\neq0$, forcing $s=\tfrac12$. Conversely at $s=\tfrac12$ the operator $\rho^{1/2}\eff\rho^{1/2}$ is Hermitian, $A=A^{\dagger}$, and $\Delta_s$ vanishes identically.
\end{proof}

The same computation says when the other members fail. For $s\neq\tfrac12$ the condition $(p_j/p_k)^{2s-1}=1$ forces $[\eff,\rho]=0$, so by Eq.~\eqref{eq:multgap} an ordering away from the symmetric split is multiplicative on a given pair of systems unless both pairs fail to commute. The two selections of $s=\tfrac12$ are independent: positivity concerns the spectrum of a single image and says nothing about tensor products, while Eq.~\eqref{eq:multgap} holds whether or not either side is positive. A third selection comes from outside. The member $\Phi_{1/2}=\sqrt{\rho}\,\eff\sqrt{\rho}$ is what Bayesian retrodiction gives through the Petz map \cite{liu2025}, so all three agree. None makes $s=\tfrac12$ the physical smoothed state, which the unraveling fixes \cite{laverick2023}. At the Margenau--Hill end it is Hermitian but not positive \cite{laverick2025}. The extended discussion is in the Supplemental Material \cite{supp}.

\section{Methods}
\label{app:methods}

Parameter values are in the Supplemental Material \cite{supp}.

Boundaries in $s$ come from bisection on $\lambda_{\min}[\cosh(\varepsilon\mathcal{K})(\eff)]$, whose sign decides positivity by Eq.~\eqref{eq:reduction}. The crossing is transversal for generic $(\eff,\rho)$ and the negativity vanishes linearly, with fitted exponents $1.004$ to $1.007$, the generic behaviour of a simple zero and not a critical exponent. The width is governed by the smallest eigenvalue of $\eff$ against the spread of $\log\rho$ and not by the dimension. For $\rho$ maximally mixed $\mathcal{K}=0$ and every $s$ is admissible at every $d$, and the shrinkage seen in Wishart ensembles is the conditioning of the ensemble, a fixed-spectrum ensemble giving full width throughout.

Covariant channels are built from Clebsch--Gordan tensors on a bond $\bigoplus_j V_j$, which makes $\sum_{m'}D^{(1)}(g)_{mm'}A^{m'}=V(g)A^{m}V(g)^{\dagger}$ exact to $10^{-16}$ by construction. The environments $\bar\rho$ and $\bar{\eff}$ are the leading eigenvectors of $T=\sum_m A^{m}\otimes\bar{A}^{m}$ and of $T^{\dagger}$, reshaped and Hermitized, and the gap of $T$ is the difference of the two largest eigenvalue moduli. Minus the logarithm of their ratio gives the inverse correlation length, a different quantity here because the leading eigenvalue is $\cos^{2}\lambda$ or $\sin^{2}\lambda$ and not one. String order and two-point functions are evaluated exactly by inserting the operators into $T$ at $L=60$. Sampled values come from Born sampling, $10^{6}$ records of length $40$ for the AKLT channel and $1.2\times10^{5}$ for the spin-$1$ channel, each block normalized to be trace preserving beforehand and the string taken between sites $8$ and $32$. The AKLT value then sits half a standard deviation from $-4/9$. A forbidden subspace is imposed by projecting a random positive operator onto its orthogonal complement, which makes it the null space of the resulting effect.

\end{document}